\documentclass[twoside, 10pt]{article}
\usepackage{amsmath,amsthm,amssymb,amscd,amstext,amsfonts}
\usepackage[utf8]{inputenc}
\usepackage{mathtools}
\usepackage{mathrsfs}
\usepackage{thmtools}
\usepackage{latexsym}
\usepackage{verbatim}
\usepackage{framed}
\usepackage{graphicx}
\usepackage{stmaryrd}
\usepackage{enumitem}
\usepackage{fullpage}
\usepackage{bm}
\usepackage{url}
\usepackage{physics}
\usepackage{dsfont}
\usepackage{todonotes}
\usepackage{tikz}
\usepackage{graphicx}
\usepackage{titling}
\usepackage{color}
\usepackage{xpatch}

\usepackage[small, explicit]{titlesec}
\titlelabel{\thetitle.\enspace}

\usepackage[linktocpage=true,]{hyperref}

\hypersetup{
	colorlinks = true,
    linkcolor={purple},
    urlcolor={purple},
    citecolor={blue!80!black}
}

\makeatletter
\def\tagform@#1{\maketag@@@{(\ignorespaces{\oldstylenums{#1}}\unskip\@@italiccorr)}}
\renewcommand{\eqref}[1]{\textup{{\normalfont(\oldstylenums{\ref{#1}}}\normalfont)}}
\let\over\@@over
\let\atop\@@atop
\makeatother

\titleformat{\paragraph}[runin]%
{\normalfont\normalsize\bfseries}{\theparagraph}{1em}{#1}[.]
\titlespacing{\paragraph}{0pt}{1.5ex plus 1ex minus .2ex}{0.7em}

\usepackage[nameinlink, noabbrev, capitalize]{cleveref}

\usepackage[letterpaper, total={5.6in, 8in}, vmarginratio=1:1, hmarginratio=1:1, headsep=21pt]{geometry}

\declaretheoremstyle[bodyfont=\normalfont\slshape, notefont=\normalfont\itshape,notebraces={{\rm(}}{{\rm)}}, postheadspace=0.5em,headpunct={\rm.}, spaceabove=8pt, spacebelow=8pt]{slbody}

\declaretheorem[name=Theorem, numberwithin=section, style=slbody]{theorem}
\declaretheorem[name=Lemma, numberwithin=section, sibling=theorem, style=slbody]{lemma}
\declaretheorem[name=Corollary, numberwithin=section, sibling=theorem, style=slbody]{corollary}
\declaretheorem[name=Proposition, numberwithin=section, sibling=theorem, style=slbody]{proposition}

\newcommand\normrow[1]{\left|\!\left|#1\right|\!\right|_{\rm row}}
\newcommand\normcol[1]{\left|\!\left|#1\right|\!\right|_{\rm col}}
\newcommand\normgamma[1]{\left|\!\left|#1\right|\!\right|_{\gamma_2}}

\newcommand\textmc[1]{{\small#1}}

\newcommand{\RR}{\mathbf{R}}   
\newcommand{\CC}{\mathbf{C}}   
\newcommand{\QQ}{\mathbf{Q}}   
\newcommand{\ZZ}{\mathbf{Z}}   

\renewcommand{\O}{\mathcal{O}}

\newcommand{\EQ}{\hbox{\rm\scriptsize EQ}}
\newcommand{\PL}{\hbox{\rm\small PL}}
\DeclareMathOperator\D{D}  
\DeclareMathOperator\R{R} 

\DeclareMathOperator\sgnrk{sgnrk} 
\DeclareMathOperator\sgn{sgn} 

\newcommand{\one}{\mathop{\mathbf{1}}\nolimits}

\newcommand{\ex}{\operatorname{E}}

\renewcommand{\maketitle}{%
  \begin{center}
    {\large\bf Communication complexity of point-line incidences\\ \smallskip over the reals }\\
    \smallskip
    {}
    \vskip 30pt
    {\sc Marcel K.\ Goh\enspace{\rm and}\enspace Hamed Hatami}
    \medskip

    \vskip 30pt
  \end{center}
}

\title{}
\author{}
\date{}

\begin{document}

\maketitle

\renewenvironment{abstract}{\quotation\noindent\small{\bfseries\abstractname.}}{\endquotation}

\begin{abstract}
We construct a point-line incidence problem over the reals whose randomized communication complexity is constant, but whose deterministic communication complexity is linear even when the players have access to an equality oracle. 
This is the strongest possible separation between these two measures, and it improves on an earlier $O(1)$-versus-$\Omega(\sqrt{n})$ separation of G\"o\"os, Harms, and Riazanov.

Because point-line incidence problems have constant sign rank, our construction also bears on a question of Harms and Zamaraev, namely whether constant sign rank together with constant randomized communication complexity forces constant equality-oracle complexity. This was already refuted by G\"o\"os, Harms, Imbach, and Sokolov with a logarithmic lower bound; our example improves the separation to linear, which is optimal.

The proof draws on a construction in the recent disproof of the sum-product conjecture over the reals by Bloom, Sawin, Schildkraut, and Zhelezov, using totally real number fields of large degree and small discriminant.

\vskip5pt
\noindent\textbf{Keywords.}\enspace Randomized communication complexity, factorization norm, sign rank, point-line incidences.
\vskip5pt
\noindent\textbf{MSC2020 Classification.}\enspace 68Q11, 68Q25.
\end{abstract}

\vskip50pt
\baselineskip=13.5pt

\section{Introduction}

Although randomized communication can be significantly more powerful than deterministic communication, there are remarkably few natural problems exhibiting such a gap. The classical example is verifying the equality of two $n$-bit strings:~this problem has randomized communication complexity $O(1)$ via hashing, but deterministic communication complexity $n+1$, the largest possible value for any $n$-bit communication problem.

In fact, most known problems with constant randomized communication complexity are built, in a precise sense, out of equality itself: they have deterministic protocols of small cost once the players are allowed to make equality queries.

This raises a natural question: \textsl{Is the power of randomness in communication, at least for natural low-complexity problems, essentially equivalent to the power to solve equality?}

In this paper, we give an optimal negative answer using a basic algebraic identity-testing problem. We exhibit an arrangement of points and lines in $\RR^2$ for which the point-line incidence problem has constant randomized communication complexity, but requires linear communication in the deterministic equality-oracle model.

\paragraph{Communication with equality queries} An $n$-bit communication problem is a Boolean matrix 
$F:\{0,1\}^n\times \{0,1\}^n \to \{0,1\}$. Alice receives $u \in \{0,1\}^n$, Bob receives $v \in \{0,1\}^n$, and their goal is to compute $F(u,v)$ while exchanging as few bits as possible.

Let $\D(F)$ denote the deterministic communication complexity of $F$, and let $\R_\epsilon(F)$ denote its \emph{public-coin} randomized communication complexity with worst-case error at most $\epsilon$. We write $\R(F)=\R_{1/3}(F)$. The choice of $1/3$ is not essential, since the error can be reduced by independent repetition of the protocol and taking the majority vote of the outcomes at only a constant-factor increase in communication.

We also consider deterministic communication with access to an equality oracle. Instead of communicating bits over a channel, Alice and Bob both have access to an oracle that takes an arbitrary bitstring $s$ from Alice and an arbitrary bitstring $t$ from Bob and broadcasts the bit $\one_{[s=t]}$ back to both of them. Each such query has unit cost. The strings $s$ and $t$ may depend arbitrarily on the players' inputs and on the previous transcript. The minimum number of oracle queries required to compute $F$ is denoted by $\D^{\EQ}(F)$. 
 
Since equality itself has randomized communication complexity $O(1)$, every equality query can be simulated by a constant-cost randomized protocol. Hence, a standard error-reduction implies that 
\[ 
\R(F) \;\lesssim\; \D^{\EQ}(F) \log \D^{\EQ}(F),
\]
where here and throughout the paper, we  
write $f\lesssim g$ and $f\gtrsim g$ for $f = O(g)$ and $f = \Omega(g)$ respectively. When the implied constants are not absolute, depending, say, on some parameter $k$, then we write, e.g., $f = O_k(g)$.

\subsection{Point-line incidences} In this paper we study the point-line incidence problem. Let $U$ be a finite set of lines in $\RR^2$, and let $V$ be a finite set of points in $\RR^2$. Alice receives a line $
u=(a,b)\in U$, representing $y=ax+b$, and Bob receives a point
$v=(x,y)\in V$. Their goal is to decide whether the point lies on the line, or equivalently whether
\begin{equation}
\label{eq:point-line}
ax+b-y=0.
\end{equation}
We denote this communication problem by $\PL_{U,V}$.

The point-line incidence problem is  the simplest nontrivial algebraic identity-testing problem beyond equality as fixing any of the variables to a constant reduces the problem to an instance of the equality problem. Our main result shows that there exists an arrangement of points and lines for which the corresponding incidence problem exhibits the strongest possible separation between randomized communication and equality-oracle deterministic communication complexities.
 
\begin{theorem}
\label{thm:main}
There exist sets $U, V \subseteq \RR^2$ of arbitrarily large size $|U| = |V| = 2^n$ such that
\[
\R(\PL_{U,V}) = O(1)
\qquad\text{and}\qquad
\D^{\EQ}(\PL_{U,V}) = \Omega(n).
\]
\end{theorem}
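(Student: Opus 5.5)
The plan is to pick $U$ and $V$ as integral elements of a totally real number field of large degree and small root discriminant, following Bloom, Sawin, Schildkraut, and Zhelezov, and to read the bound off two analytic quantities of the incidence matrix. For the upper bound $\R(\PL_{U,V})=O(1)$ I will go through the $\gamma_2$ factorization norm. Point-line incidence matrices need not have bounded $\gamma_2$ norm (grids, for instance), so the construction has to make them nearly look like equality after reduction. For the lower bound $\D^{\EQ}(\PL_{U,V})=\Omega(n)$ I will use the standard characterization: a cost-$c$ equality-oracle protocol splits the matrix into at most $2^c$ leaves. Each leaf's rectangle restricted to the $1$-entries is a combination of equality blocks (a ``blocky'' matrix), so $F$ is a $\pm1$ combination of at most $2^{O(c)}$ blocky matrices. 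Any blocky matrix has $\gamma_2$ norm at most $1$ and, more to the point, contains either a large all-zero or a large all-one submatrix, or sits inside an equality structure. I intend to turn this into a density argument: a cost-$c$ $\EQ$ protocol yields a monochromatic-or-equality rectangle of relative size $2^{-O(c)}$, or a large ``$1$-monochromatic'' rectangle after suitable normalization. The lower bound then follows from showing that every rectangle of density $2^{-o(n)}$ in the incidence matrix is far from both blocky and all-zero.

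The key steps run in this order.

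\textbf{Step 1 (construction).} Take a totally real field $K$ of degree $d\asymp n$ with root discriminant $O(1)$. Such fields exist by Golod--Shafarevich class field towers, as used in the sum-product disproof. Let $\sigma_1,\dots,\sigma_d$ be the real embeddings, fix one embedding $\sigma_1$ to view elements as reals, and let $S$ be the set of algebraic integers whose conjugates are all bounded by a constant $R$. By Minkowski-type counting with small discriminant, $|S|\ge 2^{\Omega(d)}$. Lines are pairs $(a,b)\in S^2$ and points are pairs $(x,y)$ with $x\in S$ and $y\in S\cdot S+S$, trimmed to size $2^n$.

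\textbf{Step 2 (randomized upper bound).} The incidence $ax+b-y=0$ is the vanishing of an algebraic integer $\alpha=ax+b-y$ all of whose conjugates are $O(1)$. Its norm is then a rational integer of absolute value $C^{d}$, so reducing modulo a random prime ideal of norm about $C'$ detects $\alpha\neq0$ with constant probability. The randomized protocol is to choose a random prime ideal $\mathfrak p$ among those of bounded norm and run the randomized equality test on the residues of $ax$ and $y-b$ in the finite field $\O_K/\mathfrak p$. Bounded norm makes the residue field constant size, so the players can send $a,x \bmod \mathfrak p$ at constant cost. The delicate point is that there are only $O(1)$ primes of bounded norm, while $\alpha$ has at most $\log|N\alpha|/\log C' \approx d/\log C'$ prime factors. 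I will therefore need primes of norm about $e^{O(1)}$ numbering $\gg d$, which the small-root-discriminant towers supply, since they have many split primes.

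\textbf{Step 3 (equality-oracle lower bound).} This is the main obstacle. My first attempt is to show that $\PL_{U,V}$ contains no large rectangle that is ``equality-like''. A large rectangle $A\times B$ of lines and points with few incidences is generic by Szemerédi--Trotter. I would use instead the known $\D^{\EQ}$ lower bound technique of showing the matrix has no $1$-monochromatic rectangle of size more than trivial and that its $0$-entries are not coverable by few equality-structured pieces. Concretely, I would reduce to a Greater-Than-type or Integer Inner Product-type embedding: using the field structure, I will try to embed a known hard problem for $\D^{\EQ}$, such as $k$-Hamming distance or Greater-Than on $\Omega(n)$ bits, as a submatrix. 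If that embedding fails, the fallback is a direct adaptation of the Göös--Harms--Riazanov rectangle argument, with their $\sqrt n$ loss removed by the exponential-size coordinates $S$ coming from the degree-$d$ field.
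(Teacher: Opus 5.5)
Your proposal has a genuine gap at the step you yourself flag as the main obstacle. Step~3 contains no lower-bound argument, and the routes you list cannot work.

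\begin{itemize}
\item \emph{The embeddings are impossible.} Greater-Than and $k$-Hamming distance (for $k\ge1$) contain $2\times2$ all-ones submatrices. Every point--line incidence matrix is $K_{2,2}$-free, since two lines meet at most once, so neither problem embeds into $\PL_{U,V}$. Greater-Than on $\Omega(n)$ bits would also force $\R(\PL_{U,V})=\Omega(\log n)$, contradicting your own Step~2. And $k$-Hamming distance only has $\D^{\EQ}=\Theta_k(\log m)$ on $m$-bit inputs anyway.
\item \emph{The density plan is false as stated.} By Szemer\'edi--Trotter the matrix has only $O(N^{4/3})$ ones. So every rectangle of relative size $2^{-o(n)}$ is $N^{-2/3+o(1)}$-close to all-zero.
\end{itemize}

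You already had the right reduction: a cost-$c$ $\EQ$-protocol writes $F$ as a signed sum of $2^{O(c)}$ blocky matrices, each of $\gamma_2$-norm at most $1$, so $\D^{\EQ}(F)\gtrsim\log\normgamma{F}$. The missing idea is how to lower-bound $\normgamma{\PL_{U,V}}$. The paper uses Theorem~\ref{thm:degen} (Balla--Hambardzumyan--Tomon): a $K_{2,2}$-free Boolean matrix of degeneracy $D$ has $\normgamma{F}=\Theta(\sqrt D)$. Peeling sparse rows and columns shows $D\ge k/(2N)$ when there are $k$ incidences. So everything reduces to building an arrangement with $N^{1+\Omega(1)}$ incidences.

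Your construction is not tuned for this, and you never count incidences. You draw $a,b,x$ from the same box $S$ and take $y\in S\cdot S+S$, which lives at scale $R^2$. The points then far outnumber the lines, and it is not clear that any trimming to $|U|=|V|$ keeps $N^{1+\Omega(1)}$ incidences. The paper uses the Elekes-type scaling $U=V=B(K,\sqrt C)\times B(K,C)$. Every triple $a,x\in B(K,\sqrt{C/2})$, $b\in B(K,C/2)$ gives an incidence. Lemma~\ref{lem:size} with $\Delta_K\le L^d$ then yields at least $(C^2/4L^{3/2})^d$ incidences against $N\le(9C)^{3d/2}$, so $k/N\ge N^{1/3-2\delta}$ once $C$ is large.

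Step~2 is plausible but rests on an unproved extra property of $K$. A nonzero $\alpha$ with conjugates bounded by $C_0$ can be divisible by up to $d\log C_0/\log q$ prime ideals of norm at least $q$. You would therefore need $\gtrsim d\log C_0/\log q$ prime ideals of norm at most a constant $q$, in a field of bounded root discriminant. Moreover $C_0$ must itself be large compared with that root discriminant for $S$ to be large. Theorem~\ref{thm:martinet} supplies nothing of this kind.

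The paper avoids the issue by using archimedean places instead of primes:
\begin{itemize}
\item sample a random real embedding $\sigma_i$;
\item Alice sends $\sigma_i(a)$ and $\sigma_i(b)$ to precision $O(1/C^2)$;
\item $|N(z)|\ge1$ together with AM--GM gives $\Pr_i[|\sigma_i(z)|>1/C]\ge1/(C+1)$, with one-sided error.
\end{itemize}

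Finally, the exact $\gamma_2$-norm cannot be your route to $\R=O(1)$, since here $\normgamma{\PL_{U,V}}$ must be large. Only the approximate $\widetilde\gamma_2$ is small.
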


A few remarks are in order. Previously, the best known separation between randomized communication complexity and equality-oracle deterministic communication complexity was $O(1)$-versus-$\Omega(\sqrt{n})$, due to G\"o\"os, Harms, and Riazanov~\cite{GoosHarmsRiazanov2025}. Theorem~\ref{thm:main} improves this to the optimal bound $O(1)$-versus-$\Omega(n)$. The existence of such a separation was posed explicitly as Open Problem~5 in~\cite{CheungHatamiHosseiniNikolovPitassiShirley2026}.

Constant-degree algebraic identities such as $y = ax + b$ have bounded \emph{sign rank}. It was shown in~\cite{MR4494342} that any problem $F$ with $\D^{\EQ}(F) = O(1)$ must satisfy both $\R(F) = O(1)$ and $\sgnrk(F) = O(1)$. Harms and Zamaraev~\cite{HarmsZamaraev2024} asked whether the converse holds: does bounded sign rank and bounded randomized communication complexity imply bounded $\D^{\EQ}$? This was recently refuted by~\cite{GoosHarmsImbachSokolov2025}, who exhibited a problem with constant sign rank and constant randomized complexity, yet $\D^{\EQ} = \Omega(\log n)$. Theorem~\ref{thm:main} gives a much stronger negative answer to this question: our result shows that $\D^{\EQ}$ can in fact be \emph{linear} for problems with both constant sign rank and constant randomized communication complexity.

As we show later in Theorem~\ref{thm:main2}, the point-line incidence problem of Theorem~\ref{thm:main} satisfies
\begin{equation}
\label{eq:factorization_separation}
\widetilde\gamma_2(\PL_{U,V}) = O_\delta(1)
\qquad\text{and}\qquad
\normgamma{\PL_{U,V}} \ge N^{1/6-\delta},
\end{equation}
where $N = |U| = |V|$, the constant $\delta$ can be taken arbitrarily small,  and $\widetilde\gamma_2(\PL_{U,V})$ denotes the approximate $\gamma_2$-norm of $\PL_{U,V}$, i.e., the smallest value of $\normgamma{A}$ over all real matrices $A$ that approximate $\PL_{U,V}$ entrywise to within additive error $1/3$. (A precise definition of $\normgamma A$ is deferred to Section~3, where we discuss the factorization norm in earnest.) The question of separating the factorization norm from randomized communication complexity and the approximate $\gamma_2$-norm dates back to a 2009 paper of Linial and Shraibman~\cite{LinialShraibman2009}.
 
\subsection{Technical overview}
\label{sec:overview} 

Our goal is a point-line arrangement with $\D^{\EQ}(\PL_{U,V})$ large but $\R(\PL_{U,V})$ constant. The first half is, by now, essentially understood:~as we shall explain in Section 3, a linear lower bound on $\D^{\EQ}$ follows from producing an arrangement of $N$ points and $N$ lines with $N^{1+\Omega(1)}$ incidences.
The difficulty lies entirely in keeping the number of incidences this large while simultaneously forcing the randomized communication complexity down to a constant.

\paragraph{The integer grid}
Consider the integer grid
$$\bigl[-\sqrt{m}, \sqrt{m}\bigr] \times [-m, m],$$
arguably the simplest point-line arrangement. Here $N = \Theta(m^{3/2})$ and the number of incidences is $\Theta(m^2) = \Theta(N^{4/3})$, already giving the desired lower bound $\D^{\EQ}(\PL_{U,V}) \gtrsim n$ (where $n = \log N$).

The randomized complexity, however, is where this particular arrangement fails. The integer grid admits a simple protocol of cost $O(\log n)$. Alice and Bob jointly sample a random prime $p \lesssim n^{O(1)}$ and verify the identity $ax + b = y$ modulo $p$, exchanging only the residues of their inputs mod $p$. This already gives a separation $\R(\PL_{U,V}) \lesssim \log n$ versus $\D^{\EQ}(\PL_{U,V}) \gtrsim n$. One might hope to improve the protocol to constant cost, but in a beautiful recent paper, G\"o\"os, Harms, Richter, and Sofronova~\cite{GoosHarmsRichterSofronova2026} gave an elegant application of the circle method to show that $\R(\PL_{U,V}) = \Theta(\log n)$. In other words, the protocol of taking residues modulo $p$ is essentially optimal.

Our contribution is a point-line arrangement built from the algebraic integers of a carefully chosen number field, that retains $N^{1+\Omega(1)}$ incidences yet admits a constant-cost randomized protocol.

\paragraph{Algebraic integers}
It seems implausible at first glance that an arrangement with as many incidences as the integer grid could have a much lower randomized communication complexity. We construct such an arrangement by first defining $U$ and $V$ in $K^2$, for a suitable totally real number field $K$, and then embedding the construction into $\RR^2$. This approach is inspired by the recent disproof of the sum-product conjecture over the reals by Bloom, Sawin, Schildkraut, and Zhelezov~\cite{BloomSawinSchildkrautZhelezov2026}.

The general idea is as follows. We work in a totally real number field $K$ of degree $d$. Concretely, $K$ is a subfield of $\RR$ that is a $d$-dimensional vector space over $\QQ$. It is a fundamental result of Galois theory that there are exactly $d$ ways $\sigma_1,\ldots,\sigma_d$ to view the elements of $K$ as ordinary real numbers, where each $\sigma_i:K \to \RR$ is a field embedding. The coordinates of our points and lines will be \emph{algebraic integers}, the elements of a distinguished subring $\O_K \subseteq K$ that behaves like the integers inside $\QQ$.

The one fact we need is a quantitative form of the statement that nonzero algebraic integers cannot be too small: for any nonzero $z \in \O_K$, the values $\sigma_1(z), \ldots, \sigma_d(z) \in \RR$ cannot all be close to $0$. More precisely, averaging over a uniformly random index $i \in [d]$, a nonzero $z \in \O_K$ satisfies $|\sigma_i(z)| = \Omega(1)$ in expectation.  

Now fix a large constant $C = O(1)$ and restrict the coordinates of our points and lines to the bounded set
\[
B(K,C) \;=\; \bigl\{x \in \O_K : |\sigma_i(x)| \le C \text{ for all } 1 \le i \le d\bigr\}.
\]
Every fixed embedding $\sigma_i$ maps $B(K,C)$ into the small interval $[-C,C] \subseteq \RR$, so if $B(K,C)$ is large, many of them are inevitably sent close together. However, as noted above, any two distinct elements are sent far apart under a randomly chosen embedding.

Given $a,b,x,y \in B(K,C)$, the line $u = (a,b)$ passes through the point $v = (x,y)$ precisely when the algebraic integer $z = ax + b - y$ is zero. To detect this probabilistically, Alice and Bob jointly sample a random index $i \in [d]$ and estimate
\[
\sigma_i(z) \;=\; \sigma_i(a)\sigma_i(x) + \sigma_i(b) - \sigma_i(y)
\]
to sufficient accuracy, exchanging only $O(1)$ bits. If $z = 0$ then $\sigma_i(z) = 0$ for every $i$, whereas if $z \neq 0$ then, by the fact above, $|\sigma_i(z)|$ is bounded away from $0$ for a constant fraction of the indices $i$; for such an $i$, their estimate of $\sigma_i(z)$  is far enough from $0$ to certify $z \neq 0$.  This yields the desired constant-cost randomized protocol.

Finally, although for a general $K$ the set $B(K,C)$ can be small, a deep result of Martinet~\cite{Martinet1978} guarantees the existence of totally real number fields of arbitrarily large degree $d$ for which $|B(K,C)| = \Theta(C)^d$. As we will see, this is precisely the density needed for the resulting arrangement to have $N^{1+\Omega(1)}$ incidences, which in turn yields the $\D^{\EQ}(\PL_{U,V}) = \Omega(\log N)$ lower bound.

\subsection{Related work}

The equality-oracle deterministic communication complexity, and the question of its strength, was already considered in the seminal paper of Babai, Frankl, and Simon~\cite{BFS86}, which introduced communication complexity classes.

The first separation between $\R(\cdot)$ and $\D^{\EQ}(\cdot)$ was obtained by Chattopadhyay, Lovett, and Vinyals~\cite{ChattopadhyayLovettVinyals2020}, who gave an $O(\log n)$-versus-$\Omega(n)$ separation. Subsequently,~\cite{CheungHatamiHosseiniShirley2025,CheungHatamiHosseiniNikolovPitassiShirley2026} showed that the same bounds hold for the point-line incidence problem $\PL_{U,V}$, where $U,V = [-m,m]^2$. In the other direction, G\"o\"os, Harms, Richter, and Sofronova~\cite{GoosHarmsRichterSofronova2026} very recently showed that $\R(\PL_{U,V}) \gtrsim \log n$ in this case.

The first separations of the form $O(1)$-versus-$\omega(1)$ were given independently in~\cite{HambardzumyanHatamiHatami2023} and~\cite{MR4969131}, by considering the problem of determining whether the Hamming distance between two inputs $u,v \in \{0,1\}^n$ is at most some fixed $k = O(1)$. The randomized communication complexity of this problem is $O_k(1)$, while~\cite{HambardzumyanHatamiHatami2023} showed that its equality-oracle deterministic complexity is $\Theta_k(\log n)$.

The literature's inventory of constant-cost randomized communication problems is extremely meagre. A new construction was discovered in 2024 by Sherstov and Storozhenko~\cite{ShersStor2024}, and G\"o\"os, Harms, and Riazanov~\cite{GoosHarmsRiazanov2025} used it to prove the separation $\R(F) = O(1)$ versus $\D^{\EQ}(F) \gtrsim \sqrt{n}$. Prior to our work, this was the only known example of a constant-cost randomized communication problem with super-logarithmic equality-oracle complexity.

\paragraph{Sign rank}
The \emph{sign rank} of a Boolean matrix $A$, denoted $\sgnrk(A)$, is the minimum rank of a real matrix $B$ with $\sgn(B_{i,j}) = (-1)^{A_{i,j}}$ for all entries $i,j$. This notion was introduced in 1986 by Paturi and Simon~\cite{paturi1986probabilistic}, who showed that the optimal \emph{private-coin} communication complexity with error strictly less than $1/2$ (the \emph{unbounded-error communication complexity}) is $\log\sgnrk(A) \pm O(1)$.

Sign rank is a fundamental and well-studied notion in theoretical computer science and discrete geometry. Most natural geometric relations defined in a constant-dimensional space, such as point-line incidences, unit-distance graphs, and order relations among points, have constant sign rank, since membership in such a relation is typically decided by the sign of a fixed constant-degree polynomial in a bounded number of real coordinates.

As mentioned earlier, \cite{MR4494342} showed that any problem $F$ with $\D^{\EQ}(F) = O(1)$ must satisfy $\R(F) = O(1)$ and $\sgnrk(F) = O(1)$, and Harms and Zamaraev~\cite{HarmsZamaraev2024} asked whether the converse holds. G\"o\"os, Harms, Imbach, and Sokolov~\cite{GoosHarmsImbachSokolov2025} refuted this conjecture by proving that the $k$-Hamming distance problem, for $k = O(1)$, has constant sign rank; recall that it also  satisfies $\R(F) = O(1)$ and $\D^{\EQ}(F) = \Theta(\log n)$.

Prior to our work, the Sherstov--Storozhenko construction was the only potential candidate for a stronger separation: it satisfies $\R(F) = O(1)$, and by~\cite{GoosHarmsRiazanov2025}, $\D^{\EQ}(F) \gtrsim \sqrt{n}$; however, its sign rank is not known. Our example gives an optimal separation for this line of work, since point-line incidence problems have constant sign rank.

\paragraph{Statement regarding artificial
intelligence} A large language model had some part to play in the development of ideas leading to the disproof of the sum-product conjecture. (Its precise role is made clear in a statement towards the end of Section~1 in~\cite{BloomSawinSchildkrautZhelezov2026}.) While we have borrowed part of the resulting construction for our own proof, none of the novel mathematical ideas presented in this paper were generated by artificial intelligence; they are entirely the work of the two listed human authors.

\section{Number fields}
\label{sec:number_fields}

This section collects the facts about number fields used in our construction. We need only two things:~a quantitative sense in which nonzero algebraic integers cannot be small under all embeddings (Proposition~\ref{prop:norm}), and the existence of number fields in which the bounded set $B(K,C)$ is large (Lemma~\ref{lem:size} and Theorem~\ref{thm:martinet}).  

\paragraph{Number fields and embeddings}
A \emph{number field} $K$ is a field containing $\QQ$ that is finite-dimensional as a vector space over $\QQ$; this dimension $d$ is the \emph{degree} of $K$. An element of $K$ is an \emph{algebraic integer} if it is a root of some monic polynomial with integer coefficients, and the set $\O_K$ of algebraic integers forms a ring. The reader may find it useful to keep a concrete example in mind, such as the degree-$2$ number field
$$K = \QQ(\sqrt{5}) = \bigl\{p + q\sqrt 5 : p,q \in \QQ\bigr\},$$
whose ring of integers is
$$\O_K   = \biggl\{p + q\cdot \frac{1+\sqrt{5}}{2}: p,q \in \ZZ\biggr\}.$$
We will not require explicit descriptions such as these, only the fact that $\O_K$ is a ring.

A number field of degree $d$ admits exactly $d$ embeddings (injective field homomorphisms) $\sigma_1,\ldots,\sigma_d$ into $\CC$; each fixes $\QQ$ pointwise but acts differently on the rest of $K$. If every $\sigma_i$ maps $K$ into $\RR$, then $K$ is called \emph{totally real}. For instance, $\QQ(\sqrt 5)$ is totally real, with the two embeddings $p + q\sqrt 5 \mapsto p \pm q\sqrt 5$. Fixing one embedding lets us view a totally real $K$ as a subset of $\RR$, while the full collection $\sigma_1,\ldots,\sigma_d$ records the $d$ different ``views'' of each element as a real number.

\paragraph{Algebraic integers cannot be uniformly small}
The \emph{norm} $N(x)$ of $x \in K$ is the product $N(x) = \prod_{i=1}^d \sigma_i(x)$ of all its embeddings. The key property we use is the following.

\begin{proposition}[{\rm\cite{MR457396}}, Corollary 2 of Theorem 4\/]\label{prop:norm}
For any nonzero algebraic integer $x \in \O_K$, the norm $N(x)$ is a nonzero integer; in particular, $|N(x)| \ge 1$.
\end{proposition}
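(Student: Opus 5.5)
The plan is to identify $N(x)$ with (up to sign) the constant term of the characteristic polynomial of multiplication by $x$. Concretely, I will consider the $\QQ$-linear map $m_x\colon K\to K$, $y\mapsto xy$, and its characteristic polynomial $\chi_x(t)=\det(t-m_x)\in\QQ[t]$. The first step is the standard identity $\chi_x(t)=\prod_{i=1}^d\bigl(t-\sigma_i(x)\bigr)$. To prove it, write $\chi_x=f^{d/e}$, where $f$ is the minimal polynomial of $x$ over $\QQ$ and $e=[\QQ(x):\QQ]$. This comes from choosing a basis of $K$ over $\QQ(x)$ and using the power basis of $\QQ(x)$: in that basis $m_x$ is block diagonal with $d/e$ companion blocks of $f$. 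Each of the $e$ embeddings of $\QQ(x)$ into $\CC$ then extends to exactly $d/e$ embeddings of $K$. Since the embeddings of $\QQ(x)$ send $x$ to the roots of $f$, the multiset $\{\sigma_i(x)\}$ consists of the roots of $f$, each repeated $d/e$ times. Comparing constant terms gives $N(x)=\det(m_x)=(-1)^d\chi_x(0)$, so in particular $N(x)\in\QQ$.

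The second step is integrality. If $x\in\O_K$, then each $\sigma_i(x)$ is a root of the same monic integer polynomial that $x$ satisfies, so every $\sigma_i(x)$ is an algebraic integer. Their product $N(x)$ is therefore an algebraic integer, using the fact (recalled above) that algebraic integers form a ring; this applies inside $\CC$ just as well. A rational algebraic integer lies in $\ZZ$ by the rational root theorem. Hence $N(x)\in\ZZ$.

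The third step is nonvanishing: $x\neq 0$ and each $\sigma_i$ is an injective field homomorphism, so $\sigma_i(x)\neq0$ for every $i$, and the product is nonzero. A nonzero integer has absolute value at least $1$. The only step with real content is the first, namely the block-diagonal description of $m_x$ together with the count of extensions of embeddings (the tower law for embeddings, available from separability in characteristic $0$). Since the paper cites this as a textbook result, I would just invoke those standard facts rather than reprove them.
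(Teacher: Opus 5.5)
Your proof is correct. The paper gives no argument of its own and simply cites the textbook result \cite{MR457396}, whose standard proof runs along your lines: each root of the minimal polynomial $f$ of $x$ appears $d/e$ times among the $\sigma_i(x)$, so $N(x)\in\QQ$; $N(x)$ is a product of algebraic integers, hence lies in $\ZZ$; and it is nonzero because the embeddings are injective. (The block-diagonal characteristic-polynomial step is more than you need, since the multiplicity count alone already gives $N(x)=\bigl((-1)^e f(0)\bigr)^{d/e}\in\QQ$.)
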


Proposition~\ref{prop:norm} implies that the $d$ views $\sigma_1(x),\ldots,\sigma_d(x)$ of a nonzero algebraic integer cannot all be close to $0$:~their product has absolute value at least $1$, so on average $|\sigma_i(x)| \ge 1$ in a geometric-mean sense. 

\paragraph{Number fields with many bounded integers}
For a real number $C$, consider the set of algebraic integers all of whose views are bounded by $C$:
\[
B(K, C) = \bigl\{ x \in \O_K : |\sigma_i(x)| \le C \text{ for all } 1 \le i \le d \bigr\}.
\]
The following size bound was established by Bloom, Sawin, Schildkraut, and Zhelezov in their disproof of the sum-product conjecture over the reals.

\begin{lemma}[{\rm\cite{BloomSawinSchildkrautZhelezov2026}}, Lemma~3.3\/]\label{lem:size}
Let $K$ be a totally real number field of degree $d$. For any real $C \ge 1$,
\[
\frac{C^d}{\sqrt{\Delta_K}} \;\le\; \bigl|B(K,C)\bigr| \;\le\; (2C+1)^d,
\]
where $\Delta_K$ is the \emph{discriminant} of $K$, a positive integer measuring the arithmetic complexity of $K$.
\end{lemma}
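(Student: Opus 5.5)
The plan is to regard $\O_K$ as a lattice via the Minkowski embedding
\[
\iota:\O_K\to\RR^d,\qquad x\mapsto(\sigma_1(x),\ldots,\sigma_d(x)).
\]
Because $K$ is totally real, $\iota(\O_K)$ is a full-rank lattice $\Lambda\subseteq\RR^d$ whose covolume is $\sqrt{\Delta_K}$ (standard: the Gram determinant of an integral basis is the discriminant). In this language $B(K,C)$ is exactly the set of lattice points in the cube $Q_C=[-C,C]^d$. Both bounds then reduce to packing and covering arguments, with Proposition~\ref{prop:norm} supplying the separation we need.

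\textbf{Upper bound.} Take distinct $x,y\in B(K,C)$. Then $z=x-y$ is a nonzero element of $\O_K$, so Proposition~\ref{prop:norm} gives $\prod_i|\sigma_i(z)|\ge 1$, and therefore some index $i$ has $|\sigma_i(z)|\ge 1$. Consequently the open cubes $\iota(x)+(-\tfrac12,\tfrac12)^d$, for $x\in B(K,C)$, are pairwise disjoint: two of them could only overlap if every coordinate of $\iota(z)$ had absolute value less than $1$. All of these cubes lie inside $[-C-\tfrac12,C+\tfrac12]^d$, which has volume $(2C+1)^d$. Since each small cube has volume $1$, comparing volumes gives $|B(K,C)|\le(2C+1)^d$.

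\textbf{Lower bound.} Here I would use Blichfeldt's principle, in the pigeonhole form: if $S$ is measurable with $\mathrm{vol}(S)>k\cdot\mathrm{covol}(\Lambda)$, then some translate of $S$ contains more than $k$ points of $\Lambda$. Apply this with $S=[0,C]^d$, which has volume $C^d$. This yields a vector $t$ such that $t+S$ contains at least $C^d/\sqrt{\Delta_K}$ lattice points; the averaging argument over a fundamental domain in fact gives $\ge\lceil C^d/\sqrt{\Delta_K}\rceil$. Fix one such point $\iota(x_0)$. Then for every lattice point $\iota(x)\in t+S$, each coordinate of $\iota(x-x_0)$ lies in $[-C,C]$, because both points lie in the same translate of $[0,C]^d$. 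So $x\mapsto x-x_0$ is an injection of these points into $B(K,C)$, which proves the lower bound.

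\textbf{Main obstacle.} The delicate step is justifying $\mathrm{covol}(\Lambda)=\sqrt{\Delta_K}$ and getting the averaging exactly right. For the covolume, I would take an integral basis $\omega_1,\ldots,\omega_d$ of $\O_K$ and use $\Delta_K=\det(\sigma_i(\omega_j))^2$, which is the definition of the discriminant; this gives $\mathrm{covol}(\Lambda)=|\det(\sigma_i(\omega_j))|=\sqrt{\Delta_K}$. For the averaging, integrate the counting function $t\mapsto|(t+S)\cap\Lambda|$ over a fundamental domain $F$. The integral equals $\mathrm{vol}(S)$, and $\mathrm{vol}(F)=\sqrt{\Delta_K}$, so the average count is $C^d/\sqrt{\Delta_K}$. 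Some $t$ must achieve at least the average, and since the count is an integer, at least its ceiling. This is where the statement's hypotheses are used: $C\ge1$ keeps the bound meaningful, and total reality ensures the Minkowski embedding lands in $\RR^d$ with no complex coordinates.
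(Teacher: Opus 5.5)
The paper does not prove this lemma; it imports it from Bloom--Sawin--Schildkraut--Zhelezov (their Lemma~3.3), so there is no in-paper argument to compare against. Your proof is correct and complete as written.

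\textbf{Upper bound.} This is a clean packing argument. Proposition~\ref{prop:norm} forces two distinct points of $B(K,C)$ to differ by at least $1$ in some embedding. Hence the open unit cubes centred at their images are disjoint inside $[-C-\tfrac12,C+\tfrac12]^d$, and comparing volumes gives $(2C+1)^d$.

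\textbf{Lower bound.} This is Blichfeldt averaging over a fundamental domain of covolume $\sqrt{\Delta_K}$, followed by subtracting a fixed lattice point. Your identity $\int_F |(t+S)\cap\Lambda|\,dt=\mathrm{vol}(S)$ holds because $\Lambda=-\Lambda$ makes the sets $\lambda-F$ tile $\RR^d$. Your pigeonhole step is also right: if the integer-valued count were strictly below its average everywhere on $F$, the integral would be strictly smaller.

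\textbf{Two small remarks.} First, full rank of $\Lambda$ is used implicitly; it follows from $\det(\sigma_i(\omega_j))^2=\Delta_K\neq 0$. Second, the hypothesis $C\ge 1$ plays no role in either half of your argument. Both bounds hold for every $C>0$, and for $C<1$ one even has $B(K,C)=\{0\}$ by Proposition~\ref{prop:norm}. So your closing comment assigning it a role is unnecessary rather than wrong.
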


We will not define the discriminant here; for our purposes, it suffices that the disproof of the sum-product employed number fields of arbitrarily large degree $d$ with discriminant bounded by $O(1)^d$. The existence of such number fields is a result of Martinet~\cite{Martinet1978}.

\begin{theorem}[{\rm\cite{Martinet1978}}]\label{thm:martinet}
There is a constant $L > 0$ such that for infinitely many integers $d$, there exists a totally real number field $K$ of degree $d$ with discriminant $\Delta_K \le L^d$.
\end{theorem}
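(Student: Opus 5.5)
The plan is the classical route through infinite class field towers, as in Martinet's original argument. The key observation is that discriminants behave well in unramified extensions. If $K'/K$ is an extension of number fields that is unramified at all finite primes, then the relative discriminant is trivial. The transitivity formula $\Delta_{K'} = \Delta_K^{[K':K]}\, N_{K/\QQ}(\mathfrak d_{K'/K})$ then gives
\[
\Delta_{K'}^{1/[K':\QQ]} \;=\; \Delta_K^{1/[K:\QQ]}.
\]
If moreover $K'/K$ is unramified at the infinite places, so that no real place of $K$ becomes complex, then $K'$ is totally real whenever $K$ is. So it suffices to find one totally real field $k$ with an \emph{infinite} tower of everywhere-unramified (including at infinity) extensions
\[
k = k_0 \subsetneq k_1 \subsetneq k_2 \subsetneq \cdots
\]
Setting $d_j = [k_j:\QQ]$, every $k_j$ is totally real and satisfies $\Delta_{k_j} \le L^{d_j}$ with $L = \Delta_k^{1/[k:\QQ]}$. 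Since $d_j \to \infty$, this gives infinitely many degrees, which is the theorem.

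To produce the tower, I would take $k_{j+1}$ to be the maximal unramified abelian $2$-extension of $k_j$ that is also unramified at the real places, i.e.\ the $2$-part of the Hilbert class field in the wide sense. Let $\rho$ be the $2$-rank of the class group of $k$. By the Golod--Shafarevich theorem, the $2$-class field tower of $k$ is infinite as soon as
\[
\rho \;\ge\; 2 + 2\sqrt{r+1},
\]
where $r = [k:\QQ]-1$ is the unit rank of the totally real field $k$. In group-theoretic terms, the Galois group $G$ of the maximal unramified $2$-extension has generator rank $\rho$ and relation rank at most $\rho + r + 1$, and the Golod--Shafarevich inequality forces such a pro-$2$ group to be infinite. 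So the whole proof reduces to exhibiting a single totally real field whose class group has $2$-rank large compared with the square root of its degree.

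This last step is the main obstacle, and I would attack it with genus theory. Start from a fixed small totally real field $k_0$, for instance $\QQ(\cos(2\pi/\ell))$ for a small prime $\ell$, or simply $\QQ$. Adjoin $\sqrt{m}$, where $m>0$ is a totally positive element divisible by many primes of $k_0$, so that $k = k_0(\sqrt m)$ is still totally real and the quadratic extension $k/k_0$ is ramified at $t$ finite primes. Genus theory (Chevalley's ambiguous class formula) gives
\[
\rho \;\ge\; t - 1 - \operatorname{rank}\bigl(E_{k_0}/(E_{k_0}\cap N_{k/k_0} k^\times)\bigr) \;\ge\; t - 1 - [k_0:\QQ].
\]
The delicate point is that we need the \emph{wide-sense} class group, not the narrow one, because the tower must stay totally real. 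The narrow genus field may introduce ramification at infinity, which costs up to $[k_0:\QQ]$ more in the rank.

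The degree $[k:\QQ] = 2[k_0:\QQ]$ is fixed independently of $t$, so the right-hand side of the Golod--Shafarevich bound stays constant while $\rho$ grows linearly in $t$. Choosing $t$ large, with primes $\equiv 1 \pmod 4$ (or splitting suitably in $k_0$) so that the genus characters behave, makes the criterion hold. I expect the bookkeeping of units and infinite places in the wide-versus-narrow genus count to be where care is needed. The fallback is to take $k_0 = \QQ$ and $k = \QQ(\sqrt{p_1\cdots p_t})$ with all $p_i \equiv 1 \pmod 4$. There the wide and narrow $2$-ranks differ by at most one, so $\rho \ge t-2$ while $r = 1$, and the criterion $\rho \ge 2+2\sqrt2$ holds for $t \ge 7$.
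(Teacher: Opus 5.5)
Your argument is correct in substance. The paper gives no proof of this statement and only cites \cite{Martinet1978}, whose argument is the same class-field-tower route you take: an infinite wide-sense $2$-class field tower over a totally real field, with Golod--Shafarevich applied. Martinet's careful choice of base field serves only to make $L$ small, and the paper never needs that, so your crude real quadratic field suffices.

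One slip should be fixed. Your relation bound $r(G)\le \rho+r+1$ for $p=2$ is correct. Combined with $r(G)>\rho^2/4$ for finite $G$, it gives the criterion $\rho\ge 2+2\sqrt{r+2}$, not $2+2\sqrt{r+1}$; you dropped the contribution of $-1\in E_k$ inside the square root. For $k=\QQ(\sqrt{p_1\cdots p_t})$ this means you need $\rho\ge 6$, so your bound $\rho\ge t-2$ only justifies $t\ge 8$.

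Alternatively, $t\ge 7$ does suffice. Since every $p_i\equiv 1 \pmod 4$, write $m=p_1\cdots p_t=a^2+b^2$; then $-1=N\bigl((a+\sqrt m)/b\bigr)$ is a norm from $k$. Also $\sigma$ acts by inversion on $Cl(k)$, so Chevalley's formula gives $|Cl(k)[2]|=2^{t-1}$, i.e.\ $\rho=t-1$. Since $t$ is a free parameter, either fix leaves the proof intact.

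You also use $k_0$ both for the bottom of the tower and for the auxiliary base field in the genus step; rename one of them.
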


\section{The factorization norm}

We claimed in the introduction that finding a lower bound on equality-oracle complexity reduces to furnishing an arrangement of points and lines with many incidences. In this section we make this link clear by defining the \emph{$\gamma_2$ factorization norm}. For a real matrix $A$, this is defined to be
\begin{equation}
\normgamma{A} \;=\; \min_{BC = A} \normrow{B}\,\normcol{C},
\end{equation}
where the minimum is over all factorizations $A = BC$, $\normrow{B}$ is the maximum $\ell_2$-norm of a row of $B$, and $\normcol{C}$ is the maximum $\ell_2$-norm of a column of $C$. It is known (cf.~\cite{MR4494342,CheungHatamiHosseiniNikolovPitassiShirley2026}) that
\begin{equation}
\label{eq:complexity_factorization}
\R(F) \;\lesssim\; \widetilde\gamma_2(F)^2
\qquad\text{and}\qquad
\D^{\EQ}(F) \;\ge\; \frac{1}{2}\log\normgamma{F}.
\end{equation}

The \emph{degeneracy} of a Boolean matrix $F$ is the smallest integer $D$ such that every submatrix of $F$ has a row or a column with at most $D$ one-entries. Balla, Hambardzumyan, and Tomon proved the following sharp estimate for $\normgamma{F}$ in terms of degeneracy.

\begin{theorem}[{\rm\cite{bhtpublished26}}, Theorem~1.5\/]
\label{thm:degen}
Every Boolean matrix $F$ with degeneracy $D$ that contains no $2 \times 2$ all-ones submatrix satisfies
\[
\normgamma{F} \;=\; \Theta(\sqrt{D}).
\]
\end{theorem}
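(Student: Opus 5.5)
The upper bound $\normgamma{F}\lesssim\sqrt D$ does not use the $2\times 2$ hypothesis. Degeneracy $D$ gives an elimination order: repeatedly delete a row or a column that has at most $D$ one-entries in the current submatrix. Assign each one-entry of $F$ to whichever of its row or column was deleted first. This splits $F=F_1+F_2$, where every row of $F_1$ has at most $D$ ones and every column of $F_2$ has at most $D$ ones. The factorization $F_1=F_1\cdot I$ gives $\normgamma{F_1}\le\sqrt D\cdot 1$, and symmetrically $F_2=I\cdot F_2$ gives $\normgamma{F_2}\le\sqrt D$. Since $\normgamma{\cdot}$ is a norm, $\normgamma{F}\le 2\sqrt D$.

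For the lower bound, start from the definition: since the degeneracy is exactly $D$, some submatrix $A$ has every row and every column containing at least $D$ ones. Restricting a factorization $BC$ to a subset of rows of $B$ and columns of $C$ cannot increase either factor's norm, so $\normgamma{F}\ge\normgamma{A}$. Hence it suffices to prove $\normgamma{A}\gtrsim\sqrt D$ for a $2\times2$-all-ones-free Boolean matrix $A$ with minimum row and column degree at least $D$.

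The tool is the dual bound
\[
\normgamma{A}\;\ge\;\bigl\|A\circ uv^{T}\bigr\|_{\rm tr}
\]
for unit vectors $u,v$, together with $\|M\|_{\rm tr}\ge \|M\|_F^2/\|M\|_{\rm op}$. I would take $u_i\propto\sqrt{\deg(i)}$ and $v_j\propto\sqrt{\deg(j)}$, or possibly a slightly different degree-based weighting, chosen so that both $\|M\|_F^2$ and $\|M\|_{\rm op}$ can be computed in terms of degrees, where $M=A\circ uv^T$. To bound $\|M\|_{\rm op}$ I would use $\|M\|_{\rm op}^4\le\operatorname{tr}\bigl((MM^T)^2\bigr)$. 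Write this trace as a weighted count of closed walks of length $4$. The $2\times2$-freeness means two distinct rows share at most one column, so every closed $4$-walk is degenerate: it runs back and forth along a path with at most two edges. The weighted count therefore collapses to sums over edges and over paths of length two. These sums can be bounded by the degree sums appearing in $\|M\|_F^2$, divided by a factor of order $D$ coming from the minimum-degree assumption. The target is $\|M\|_F^2\ge c$ and $\|M\|_{\rm op}\lesssim D^{-1/2}$, giving $\|M\|_{\rm tr}\gtrsim\sqrt D$.

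The main obstacle is irregular degrees. If $A$ is roughly biregular with degrees about $D$, uniform weights already work: $\|A\|_{\rm tr}/\sqrt{mn}$ combined with the walk count gives $\sqrt D$. In general, however, a few very high-degree rows or columns can dominate $\operatorname{tr}((MM^T)^2)$. My first attempt would be to tune the weights, for instance $u_i\propto \deg(i)^{\alpha}$, so that the two-edge-path terms are controlled by Cauchy--Schwarz against the edge terms. If that fails, I would fall back on dyadic pigeonholing to pass to a sub-submatrix whose row degrees, and separately whose column degrees, lie within a factor $2$ of each other while the minimum degree stays $\gtrsim D$. Keeping that loss constant rather than logarithmic is exactly the delicate point of the sharp estimate.
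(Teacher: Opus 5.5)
The paper gives no proof of this statement; it is quoted from Balla--Hambardzumyan--Tomon. So your argument has to stand on its own.

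\textbf{What is correct.} Your upper bound is fine: the elimination-order split $F=F_1+F_2$, with $F_1=F_1I$ and $F_2=IF_2$, gives $\normgamma{F}\le 2\sqrt D$ without using the $2\times2$ hypothesis. The reduction to a submatrix $A$ whose rows and columns all contain at least $D$ ones is also fine, as is the dual bound $\normgamma{A}\ge\|A\circ uv^T\|_{\rm tr}$.

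\textbf{Where the lower bound breaks.} The inequality you chose cannot work. For Boolean $A$ and unit vectors $u,v$, the matrix $M=A\circ uv^T$ satisfies
\[
\|M\|_{\rm op}\;\ge\;u^TMv\;=\;\sum_{i,j}A_{ij}u_i^2v_j^2\;=\;\|M\|_F^2 .
\]
Hence $\|M\|_F^2/\|M\|_{\rm op}\le 1$ for every choice of weights. Your two targets, $\|M\|_F^2\ge c$ and $\|M\|_{\rm op}\lesssim D^{-1/2}$, are therefore incompatible once $D$ is large. Even for a $D$-regular $A$ with uniform weights, the Perron singular value $\|A\|_{\rm op}=D$ makes this route give only $\normgamma{A}\ge 1$. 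Bounding $\|M\|_{\rm op}$ by the larger quantity $\operatorname{tr}((MM^T)^2)^{1/4}$ only weakens it further.

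What you need is H\"older's inequality on singular values, $\|M\|_F^2\le\|M\|_{\rm tr}^{2/3}\operatorname{tr}((MM^T)^2)^{1/3}$, i.e.\ $\|M\|_{\rm tr}\ge\|M\|_F^3/\operatorname{tr}((MM^T)^2)^{1/2}$. Combine this with your correct observation that $2\times2$-freeness leaves only degenerate closed $4$-walks. Then uniform weights on a submatrix of $A$ with row degrees in $[\delta_r,\Delta_r]$ and column degrees in $[\delta_c,\Delta_c]$ give $\normgamma{A}\ge\sqrt{\delta_r\delta_c/(\Delta_r+\Delta_c)}$. That settles the nearly biregular case.

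\textbf{What remains open.} The irregular case is the real content of the theorem, and your proposal only lists things you would try. No fixed weighting on the $D$-core works:
\begin{itemize}
\item Let every row put half its ones into a sparse set of columns of degree about $D^{10}$. This ruins uniform weights.
\item Take a $D$-regular $2\times2$-free $N\times N$ matrix and add one row with about $N^{1/3}D$ ones, placed on columns with pairwise disjoint supports. That single row then dominates $\operatorname{tr}((MM^T)^2)$. The H\"older bound with $u_i\propto\sqrt{\deg(i)}$ and $v_j\propto\sqrt{\deg(j)}$ drops to about $\sqrt D\,N^{-1/6}$.
\end{itemize}
Naive dyadic pigeonholing on degrees loses a factor of order $\log(\Delta/D)$, where $\Delta$ is the maximum degree, and this is not bounded in terms of $D$.

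As written, you have proved $\normgamma{F}\lesssim\sqrt D$ but not $\normgamma{F}\gtrsim\sqrt D$. The missing ingredient is an argument that handles irregular degrees while losing only a constant factor.
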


Since any two lines in $\RR^2$ cross at most once, every point-line incidence matrix $\PL_{U,V}$ is automatically free of $2 \times 2$ all-ones submatrices, so Theorem~\ref{thm:degen} applies directly. Combined with the Szemer\'edi--Trotter theorem~\cite{ST83}, this bounds the largest possible value of  $\normgamma{\PL_{U,V}}$.

\begin{corollary}
\label{cor:incidence_upper}
For any arrangement of $N$ lines and $N$ points in $\RR^2$,
\[
\normgamma{\PL_{U,V}} \;\lesssim\; N^{1/6}.
\]
\end{corollary}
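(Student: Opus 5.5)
By Theorem~\ref{thm:degen}, since $\PL_{U,V}$ has no $2\times 2$ all-ones submatrix, it suffices to show that the degeneracy $D$ of any point-line incidence matrix with $N$ points and $N$ lines satisfies $D \lesssim N^{1/3}$. Then $\normgamma{\PL_{U,V}} = \Theta(\sqrt{D}) \lesssim N^{1/6}$.

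To bound the degeneracy, take an arbitrary submatrix. It corresponds to a set $U'$ of $n_1 \le N$ lines and a set $V'$ of $n_2 \le N$ points. We must find a line of $U'$ containing at most $D$ points of $V'$, or a point of $V'$ lying on at most $D$ lines of $U'$. Suppose for contradiction that every line of $U'$ contains more than $D$ points of $V'$ and every point of $V'$ lies on more than $D$ lines of $U'$. Then the number of incidences $I$ satisfies
\[
I > D\max(n_1,n_2) \ge D\sqrt{n_1 n_2}.
\]
The Szemer\'edi--Trotter theorem~\cite{ST83} gives
\[
I \lesssim (n_1 n_2)^{2/3} + n_1 + n_2 .
\]
We compare the two bounds term by term.

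If the first term dominates, then $D\sqrt{n_1n_2} \lesssim (n_1n_2)^{2/3}$. Hence $D \lesssim (n_1n_2)^{1/6} \le N^{1/3}$.

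If instead the linear terms dominate, then $D\max(n_1,n_2) \lesssim n_1+n_2 \le 2\max(n_1,n_2)$, so $D = O(1)$.

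Either way, taking $D = c\,N^{1/3}$ for a large enough absolute constant $c$ yields a contradiction. So every submatrix has a row or a column with at most $c N^{1/3}$ ones, and the degeneracy is $O(N^{1/3})$.

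There is no real obstacle here. The only point requiring care is that the degeneracy condition is applied to submatrices of unequal dimensions. The bound $\max(n_1,n_2)\ge\sqrt{n_1n_2}$ together with $n_1,n_2\le N$ handles this uniformly.
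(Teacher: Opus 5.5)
Your proof is correct and follows the same route as the paper: apply the Szemer\'edi--Trotter theorem to every submatrix to bound the degeneracy of $\PL_{U,V}$ by $O(N^{1/3})$, then invoke Theorem~\ref{thm:degen}. The paper leaves the case analysis implicit, including the step $\max(n_1,n_2)\ge\sqrt{n_1n_2}$ for unbalanced submatrices; you write it out explicitly.
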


\begin{proof}
By the Szemer\'edi--Trotter theorem, any arrangement of $m$ lines and $n$ points in $\RR^2$ has at most $O(m^{2/3}n^{2/3} + m + n)$ incidences. Applied to every submatrix, this bounds the degeneracy of $\PL_{U,V}$ by $O(N^{1/3})$, and the claim follows from Theorem~\ref{thm:degen}.
\end{proof}

In the other direction, a simple lower bound on degeneracy in terms of edge density turns incidences into a lower bound on the factorization norm.

\begin{corollary}
\label{cor:incidence_lower}
Let $U$ and $V$ be an arrangement of $N$ lines and $N$ points in $\RR^2$ with $k$ incidences. Then
\[
\normgamma{\PL_{U,V}} \;\gtrsim\; \sqrt{\frac kN}
\qquad\text{and}\qquad
\D^{\EQ}(\PL_{U,V}) \;\gtrsim\; \log\frac kN.
\]
\end{corollary}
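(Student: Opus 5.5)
The plan is to bound the degeneracy $D$ of $\PL_{U,V}$ from below and then apply Theorem~\ref{thm:degen}. The second estimate will follow from the first via \eqref{eq:complexity_factorization}. The incidence matrix $F=\PL_{U,V}$ is an $N\times N$ Boolean matrix with exactly $k$ one-entries. Since two lines meet in at most one point, $F$ contains no $2\times 2$ all-ones submatrix, so Theorem~\ref{thm:degen} gives $\normgamma{F}\gtrsim\sqrt{D}$. It therefore suffices to show $D\ge k/(2N)$, up to an additive constant.

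For the degeneracy bound, view $F$ as the biadjacency matrix of a bipartite graph $G$ with $2N$ vertices (lines and points) and $k$ edges. By definition of $D$, every submatrix, i.e.\ every induced subgraph of $G$ containing at least one row and one column vertex, has a vertex of degree at most $D$. I will run the standard peeling argument. Repeatedly delete a row or column with at most $D$ ones from the current submatrix. Each deletion removes at most $D$ edges. Once either all rows or all columns have been deleted, no edges remain. There are at most $2N$ deletions in total, so $k\le 2ND$, giving $D\ge k/(2N)$.

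One technical point is submatrices that have rows but no columns. These are trivial: such a submatrix has no ones, so its rows have zero ones and they can be deleted at no cost. The argument above handles this case automatically.

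Combining the two estimates gives $\normgamma{F}\gtrsim\sqrt{D}\gtrsim\sqrt{k/N}$. For the second claim, \eqref{eq:complexity_factorization} gives
\[
\D^{\EQ}(F)\ge\tfrac12\log\normgamma{F}\ge\tfrac14\log\frac kN-O(1).
\]
This is $\gtrsim\log(k/N)$ whenever $k/N$ exceeds a suitable constant. When $k/N=O(1)$ the statement is vacuous, or at least trivial once we interpret the claim with the usual convention that $\log(k/N)$ is replaced by $\max(1,\log(k/N))$ or that $k/N$ is large.

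I expect no real obstacle. The only care needed is the bookkeeping at the edges: the additive constant, and making sure Theorem~\ref{thm:degen}'s lower bound $\sqrt{D}$ is used rather than its upper bound.
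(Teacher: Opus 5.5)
Your proposal is correct and matches the paper's proof: the same peeling argument gives $k \le 2ND$, and Theorem~\ref{thm:degen} and \eqref{eq:complexity_factorization} then finish. Your handling of the additive constant in the $\D^{\EQ}$ bound is slightly more careful than the paper's, and no reinterpretation is actually needed when $1 < k/N = O(1)$: for $N \ge 2$ with $k \ge 1$, the absence of $2\times 2$ all-ones submatrices makes the matrix nonconstant, so $\D^{\EQ} \ge 1$ already suffices.
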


\begin{proof}
Suppose that the number of $1$-entries in the matrix $\PL_{U,V}$ is $k$, and suppose that it has degeneracy $D$. Repeatedly removing a row or column in which the number of $1$-entries is at most $D$ empties the matrix in $2N$ steps. All of the $1$-entries are accounted for in this way, hence we see that $k < 2ND$. In other words, $D \gtrsim k/N$, and Theorem~\ref{thm:degen} supplies the first claim. The second claim then follows from~\eqref{eq:complexity_factorization}.
\end{proof}

Corollary~\ref{cor:incidence_lower} allows us to obtain $\D^{\EQ}(\PL_{U,V}) \gtrsim \log N$ by finding a point-line arrangement $(U,V)$ in which the lines of $U$ pass through $N^{\Omega(1)}$ points of $V$ on average.
 
\section{Proof of the main result}
\label{sec:main_proof}
Theorem~\ref{thm:main} follows immediately from the theorem below, by setting $\delta$ to any fixed constant smaller than $1/6$ and applying~\eqref{eq:complexity_factorization}. The theorem provides a very strong lower bound on the $\gamma_2$ factorization norm; to appreciate its strength, recall that the $\gamma_2$ factorization norm of an $N \times N$ Boolean matrix is always at most $\sqrt{N}$, via the trivial factorization $F = IF$, and that by Corollary~\ref{cor:incidence_upper}, any arrangement of $N$ lines and $N$ points $\PL_{U,V}$ always satisfies $\normgamma{\PL_{U,V}} \lesssim N^{1/6}$.

\begin{theorem}
\label{thm:main2}
For every $\delta > 0$, there exist sets $U, V \subseteq \RR^2$ of arbitrarily large size $|U| = |V| = N$ such that
\[
\R(\PL_{U,V}) = O_\delta(1)
\qquad\text{and}\qquad
\normgamma{\PL_{U,V}} \;\gtrsim\; N^{1/6 - \delta}.
\]
\end{theorem}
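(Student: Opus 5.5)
We build the arrangement inside a Martinet field, as sketched in the overview. Fix $\delta>0$ and let $K$ be a totally real field of degree $d$ with $\Delta_K\le L^d$, as in Theorem~\ref{thm:martinet}, and fix the embedding $\sigma_1$ to view $K\subseteq\RR$. Choose a large constant $C=C(\delta)$ and a parameter $t\ge 1$ to be tuned. We take lines $U=\{(a,b): a\in B(K,C),\ b\in B(K,C^2)\}$ and points $V=\{(x,y): x\in B(K,C),\ y\in B(K,3C^2)\}$, mimicking the grid $[-\sqrt m,\sqrt m]\times[-m,m]$ with $C$ playing the role of $\sqrt m$. We then pad or trim to equalize $|U|=|V|$; the sizes differ only by a factor $O(1)^d$, which is harmless. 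By Lemma~\ref{lem:size} and $\Delta_K\le L^d$, both sets have size $N=C^{3d}\cdot 2^{O(d)}$ (upper bounds like $(2C+1)^d$ and lower bounds like $(C/\sqrt L)^d$ differ by $O(1)^d$). For each line $(a,b)$ and each $x\in B(K,C)$, the value $y=ax+b$ lies in $\O_K$ and satisfies $|\sigma_i(y)|\le C\cdot C+C^2\le 3C^2$, so $(x,y)\in V$. Hence every line carries at least $|B(K,C)|\ge (C/\sqrt L)^d$ points, and the number of incidences is $k\ge N\,(C/\sqrt L)^d$.

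For the lower bound on $\normgamma{\PL_{U,V}}$, Corollary~\ref{cor:incidence_lower} gives $\normgamma{\PL_{U,V}}\gtrsim (C/\sqrt L)^{d/2}$. Since $N\le (c_0C)^{3d}$ for an absolute $c_0$, we get $(C/\sqrt L)^{d/2}\ge N^{1/6}\cdot (c_0\sqrt L)^{-d/2}$. Writing $N^{\delta}\ge C^{3\delta d}$, this is at least $N^{1/6-\delta}$ once $C^{3\delta}\ge c_0\sqrt L$, i.e.\ for $C=C(\delta)$ large. Letting $d\to\infty$ along Martinet's sequence gives arbitrarily large $N$. To get exactly $|U|=|V|$, we shrink the larger set to a subset, keeping a constant fraction of incidences; for example, we take $V$ to be all points and restrict the lines, adjusting the $y$-range so that the sizes match up to $2^{O(d)}$ and deleting lines uniformly.

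For the randomized upper bound, we use the protocol from the overview. For $z=ax+b-y\neq0$, Proposition~\ref{prop:norm} gives $\sum_i\log|\sigma_i(z)|\ge0$, while each $|\sigma_i(z)|\le 5C^2$. Hence the index set $\{i:|\sigma_i(z)|\ge \eta\}$ has density at least $\rho=\rho(C,\eta)>0$. Quantitatively, if a fraction $1-\rho$ has $\log|\sigma_i|<\log\eta$, then $0\le(1-\rho)\log\eta+\rho\log(5C^2)$, which fails for $\eta$ small relative to $\rho$. So we pick $\rho=1/2$, say, and $\eta=(5C^2)^{-1}$. Using public coins, the players sample $i\in[d]$. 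Alice sends $\sigma_i(a),\sigma_i(b)$ rounded to precision $\eta/(10C)$, which costs $O(\log C)=O_\delta(1)$ bits. Bob computes an approximation of $\sigma_i(z)$ with error $<\eta/2$ and rejects if it exceeds $\eta/2$ in absolute value. There is no error on incidences and one-sided error at most $1-\rho$ otherwise; repeating $O(1)$ times reduces the error to $1/3$, so $\R=O_\delta(1)$.

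The main obstacle is the bookkeeping in the lower bound: all the $2^{O(d)}$ losses (the discriminant factor $L^{d/2}$, the $(2C+1)^d$ versus $C^d$ gap, and the size balancing) must be absorbed into $N^{\delta}$. This works precisely because $N$ is polynomial in $C^d$ with $C$ allowed to grow with $1/\delta$. Care is needed that the randomized cost depends only on $C$, hence only on $\delta$, and not on $d$.
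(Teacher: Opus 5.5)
Your argument is correct and follows the paper's strategy: a Martinet field, products of sets $B(K,\cdot)$ with the $y$-range quadratic in the slope range, incidences counted via $y=ax+b$, Corollary~\ref{cor:incidence_lower}, and a random-embedding protocol with $O(\log C)$-bit approximations. It differs from the paper in two details.

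First, the paper takes $U=V=B(K,\sqrt C)\times B(K,C)$ and counts only the incidences with $a,x\in B(K,\sqrt{C/2})$, $b\in B(K,C/2)$, so $|U|=|V|$ is automatic. Your asymmetric choice needs a balancing step, and your sketch of it is muddled. As defined, $U\subseteq V$ (since $B(K,C^2)\subseteq B(K,3C^2)$), so there are no surplus lines to delete. Trimming $V$ by averaging keeps only a $|U|/|V|\ge(21L)^{-d}$ fraction of incidences, not a constant fraction. That loss is harmless, as you say. The clean fix is to let $b$ range over $B(K,3C^2)$ as well, i.e.\ take $U=V$; this is the paper's construction after renaming $C$.

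Relatedly, $N^{\delta}\ge C^{3\delta d}$ does not follow from Lemma~\ref{lem:size}, which gives only $N\ge C^{3d}/L^{d}$. You do not need a lower bound on $N$ at all: $N^{1/6-\delta}\le(c_0C)^{d/2-3\delta d}$ compares directly with $(C/\sqrt L)^{d/2}$.

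Second, your protocol analysis is a bit sharper than the paper's. The paper uses AM--GM to get $\ex_i|\sigma_i(z)|\ge1$ and then a Markov-type split. This gives detection probability $1/(C+1)$ at threshold $1/C$, hence $O(C)$ repetitions. You instead average $\log|\sigma_i(z)|$ directly against the explicit bound $|\sigma_i(z)|\le5C^2$. This gives probability at least $1/2$ at threshold $1/(5C^2)$, so the total cost is $O(\log C)$ rather than $O(C\log C)$. Both are $O_\delta(1)$, so the difference affects only constants.
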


\begin{proof}
Let $L \ge 1$ be the constant from Theorem~\ref{thm:martinet}, and let $C = C(\delta, L) > 1$ be determined later. By Theorem~\ref{thm:martinet}, for infinitely many integers $d$ there exists a totally real number field $K$ of degree $d$ with discriminant $\Delta_K \le L^d$. Fix any such $K$, and set
\[
U = V = B\bigl(K, \sqrt{C}\bigr) \times B(K, C),
\]
with $N = |U| = |V|$.

\paragraph{Bounding the factorization norm} First we invoke Lemma~\ref{lem:size}, which gives the upper bound
\[
N \;\le\; (2\sqrt{C}+1)^d (2C+1)^d \;\le\; (9C)^{3d/2}
\]
on $N$. By Lemma~\ref{lem:size}, the number of $1$-entries in $\PL_{U,V}$ is at least
\[
\left(\frac{\sqrt{C/2}}{\sqrt{L}}\right)^{d} \left(\frac{\sqrt{C/2}}{\sqrt{L}}\right)^{d} \left(\frac{C/2}{\sqrt{L}}\right)^{d}
\;=\;
\left(\frac{C^2}{4L^{3/2}}\right)^{d},
\]
since for any $a, x \in B\bigl(K, \sqrt{C/2}\bigr)$ and any $1 \le i \le d$ we have $|\sigma_i(ax)| \le C/2$, so any $b \in B(K, C/2)$ gives us $|\sigma_i(ax + b)| \le C$; that is, each such triple $(a,x,b)$ corresponds to an incidence with $y=ax+b\in B(K,C)$. Choosing $C = C(\delta, L)$ large enough that
\[
\frac{C^2}{4L^{3/2}} \;\ge\; (9C)^{(3/2)\left(4/3- 2\delta\right)}
\]
guarantees
\[
\left(\frac{C^2}{4L^{3/2}}\right)^{d}\!\biggl/ N \;\gtrsim\; N^{1/3 - 2\delta}.
\]
Consequently, by Corollary~\ref{cor:incidence_lower},
\[
\normgamma{\PL_{U,V}} \;\gtrsim\; \sqrt{N^{1/3 - 2\delta}} \;=\; N^{1/6 - \delta},
\]
as desired.

\paragraph{Bounding the randomized communication complexity} It remains to prove the upper bound on
$\R(\PL_{U,V})$, which we record as a separate lemma.

\begin{lemma}\label{lem:r}
$\R(\PL_{U,V}) = O_\delta(1)$.
\end{lemma}

\begin{proof}
We exploit the fact that each embedding $\sigma_i$ is a field homomorphism, and hence preserves the algebraic structure of $K$. As before, suppose that Alice knows $a$ and $b$, Bob knows $x$ and $y$, and they want to know whether $z = ax+b-y = 0$.

Alice and Bob use their shared randomness to sample $i$ uniformly from $[d]$. Let $\eta > 0$ be chosen later. At communication cost $O\bigl(\log(C/\eta)\bigr)$, Alice sends real numbers $\alpha, \beta$ approximating $\sigma_i(a)$ and $\sigma_i(b)$ to within additive error $\eta$, i.e., $|\alpha - \sigma_i(a)| < \eta$ and $|\beta - \sigma_i(b)| < \eta$. Bob's estimate $\zeta = \alpha\sigma_i(x) + \beta - \sigma_i(y)$ of $\sigma_i(z)$ satisfies 
\[
|\zeta - \sigma_i(z)| = \bigl|(\alpha - \sigma_i(a))\sigma_i(x) + (\beta - \sigma_i(b))\bigr| \;\le\; (C+1)\eta.
\]

If $z = 0$, then $\sigma_i(z) = 0$ for every $i$. If $z \neq 0$, then by Proposition~\ref{prop:norm} and the \textmc{AM}--\textmc{GM} inequality,
\[
1 \;\le\; \bigl|N(z)\bigr|^{1/d} = \biggl(\prod_{i=1}^{d} |\sigma_i(z)|\biggr)^{1/d} \;\le\; \ex_{i \in [d]}\bigl[|\sigma_i(z)|\bigr].
\]
Splitting this expectation according to whether $|\sigma_i(z)| > 1/C$,
\[
1 \;\le\; C \cdot \Pr_{i \in [d]}\bigl[|\sigma_i(z)| > 1/C\bigr] + \frac{1}{C}\Bigl(1 - \Pr_{i \in [d]}\bigl[|\sigma_i(z)| > 1/C\bigr]\Bigr),
\]
which rearranges to
\[
\Pr_{i \in [d]}\bigl[|\sigma_i(z)| > 1/C\bigr] \;\ge\; \frac{1 - 1/C}{C - 1/C} \;=\; \frac{1}{C+1}.
\]

It remains to choose $\eta$ small enough that the estimate $\zeta$ lets Bob reliably distinguish $\sigma_i(z) = 0$ from $|\sigma_i(z)| > 1/C$. Taking
\[
\eta = \frac{1}{3C(C+1)},
\]
the additive error $|\zeta - \sigma_i(z)| \le (C+1)\eta = 1/(3C)$ is smaller than half of $1/C$, so Bob can correctly decide which case holds. Since $C$ is defined in terms of $\delta$ and the absolute constant $L$, and since $\eta$ depends only on $C$, the communication cost is $O\bigl(\log(C/\eta)\bigr) = O_\delta(1)$. The protocol is always correct when $z = 0$; when $z \neq 0$, it fails only if the sampled $i$ has $|\sigma_i(z)| \le 1/C$, which happens with probability at most $1 - 1/(C+1)$. Since the error is one-sided, repeating the protocol $O(C)$ times and reporting $z \neq 0$ if any repetition detects it reduces the error to $1/3$, increasing the communication by a factor of $O(C) = O_\delta(1)$.
\end{proof}
This lemma completes the proof of Theorem~\ref{thm:main2} (and consequently of Theorem~\ref{thm:main}).
\end{proof}
 
\section*{Acknowledgements}

Both authors are funded by the Natural Sciences and Engineering Research Council of Canada. We would
like to thank Hazem Hassan for helpful and enjoyable discussions, and are also grateful to Andrew Alexander, Nathan Harms, and Shachar Lovett for corrections on an earlier draft of this paper.

\bibliographystyle{alphacitation}
\begingroup\frenchspacing
\xpatchcmd{\em}{\itshape}{\slshape}{}{}
\bibliography{ref}
\endgroup\goodbreak

\bigskip\noindent
\textsc{Department of Mathematics and Statistics, McGill University, Montr\'eal, Qu\'ebec {\small H3A$\;$0B9}, Canada}

\smallskip\noindent
\textsl{E-mail address}: \texttt{marcel.goh@mail.mcgill.ca}

\bigskip\noindent
\textsc{School of Computer Science, McGill University, Montr\'eal, Qu\'ebec {\small H3A$\;$0E9
}, Canada}

\smallskip\noindent
\textsl{E-mail address}: \texttt{hamed.hatami@mcgill.ca}

\end{document}